\begin{document}
\title{Balls into Bins with Strict Capacities and Weighted Edges}
\author{Ankur Sahai}
\maketitle
\pagestyle{plain}
\begin{abstract}        
We explore a novel theoretical model for studying the performance of distributed storage management systems where the data-centers have limited capacities (as compared to storage space requested by the users). Prior schemes such as Balls-into-bins (used for load balancing) neither consider bin (consumer) capacities (multiple balls into a bin) nor the future performance of the system after, balls (producer requests) are allocated to bins and restrict number of balls as a function of the number of bins. Our problem consists of finding an optimal assignment of the online producer requests to consumers (via weighted edges) in a complete bipartite graph while ensuring that the total size of request assigned on a consumer is limited by its capacity. The metric used to measure the performance in this model is the (minimization of) weighted sum of the requests assigned on the edges (loads) and their corresponding weights. We first explore the optimal offline algorithms followed by the analysis of different online techniques (by comparing their performance against the optimal offline solution). LP and Primal-Dual algorithms are used for calculating the optimal offline solution in $O(r \cdot n)$ time (where r and n are the number of requests and consumers respectively) while randomized algorithms are used for the online case.\\

We propose randomized online algorithms in which the consumers are selected based on edge probabilities (that can change with consumer failures; due to capacity exhaustion) and evaluate the performance of these randomized schemes using probabilistic analysis. The performance of the online algorithms is measured using competitive analysis assuming an oblivious adversary who knows the randomized algorithm but not the results produced. For the simplified model with equal consumer capacities an average-case competitive ratio (which compares the average cost of the output produced by the online algorithm and the minimum cost of the optimal offline solution) of $\big( \frac{\overline{d}}{\min_{i, j} d_{i,j}} \big)$ (where d is the edge weight / distance) is achieved using an algorithm that has equal probability for selecting any of the available edges with a running time of $O(r)$. In the extending the model to arbitrary consumer capacities we show an average case competitive ratio of $\big( \frac{\overline{d \cdot c}}{\overline{c} \cdot \min_{i,j} d_{i,j}} \big)$. This theoretical model gives insights to a (storage) cloud system designer about, how the different attributes (producer requests, edge weights and consumer capacities) effect the overall (read / write) performance of a distributed storage management system over a period of time.

\keywords {
Online algorithms, Primal-dual, Load balancing, Competitive ratio.
}
\end{abstract}
%
\section{Motivation}
\label{sec:motivation}
A (storage cloud) is a collection of VMs (represented by consumers in our model in section \ref{sec:problem-definition}) and data-centers (producers) that interact with each other through network links with different bandwidths / data-transfer rates (inversely proportional to the edge weights). VMs generate requests for virtual disk space (requests) at the time of creation. The data-centers have limited storage space (capacity). One of the core problems is to optimally select a data-center for allocating the disk space for a VM. If we assume that all data-centers are available for the requests generated by each VM then this configuration forms a complete bipartite graph. As the VM users are added gradually to the cloud and their disk-space requirements need to be satisfied instantly to meet the SLAs this forms the online part of the problem. We assumes that once a VM is allocated storage space on a particular data-center it cannot be moved to a different one, thus VMs read / write data using the same network links (fixed at the time of VM provisioning) throughout their lifetime.

The overall I/O performance of the cloud storage system can be optimized when the quality of network links used for the majority of read / write operations is maximized which, can be acheived by using the higher quality (inversely proportional to edge weights) links for allocating as much producer requests as possible. This is equivalent to minimizing the the distance weighted sum of request allocated in the system. This theoretical model can be used to measure the performace of distributed storage provisionig schemes such as VMware's virtualization framework - VSphere \cite{vmware-drs,vmware-scale-storage}.

\section{Problem Definition}
\label{sec:problem-definition}

In the offline model ${\cal M_{OFF}}$ the requests sizes $s(t)$, capacities $c_j$ and edge weights $d_{i,j}$ are arbitrary.

\begin{itemize}
\item A complete (undirected) bipartite graph $G = (P, C, E)$.
\item Set $P = \{P_i \: | \: i \in [m]\}$ of producers, $m = |P|$. Set $C = \{C_j \: | \: j \in [n]\}$ of consumers, $n = |C|$, with arbitrary capacity~$c$. The capacity is a strict bound for the consumer's load and cannot be exceeded.
\item Set of edges $E = \{e_{i,j} \bigm\vert i \in [m], j \in [n]\}$ connect each producer to each consumer and with $d_{i,j}$ which denote the distance between producer $P_i$ and $c_j$.
\item $r$ is the total number of requests. The requests are produced by the producers and allocated to the consumers. A request can be split and allocated to different producers. s(t) is the size of the request produced at time t.
\item Producers are selected at random in each round.
\item $d(t)$ is the distance between the producer and consumer that are chosen in time step $t$. The total cost of the random allocation is $G = \sum_{t=1}^{r} s(t) \cdot d(t)$.
\item We assume that every request can commit to at least one consumer (regardless of the previous allocation), i.~e. at least one consumer must have sufficient space left. The number and size of requests must be limited accordingly.
\item $\ell_j$ is the load on consumer $c_j$. All loads $\ell_1, ..., \ell_n$ are set to $0$ in the beginning. Whenever a consumer receives a request, its load is increased by the size of the request. 
\item $\ell_{i, j}$ is the total request from producer $P_i$ allocated to consumer $C_j$ known as the edge load. All loads $\ell_{1,1}, ..., \ell_{m,n}$ are set to $0$ in the beginning. Whenever a consumer receives a request from a producer the corresponding edge load is increased by the size of the request.

\end{itemize}

The objective of this problem is to find an average-case \cite{average-case} $\alpha$-competitive online algorithm (assuming startup cost $\beta = 0$), that minimizes the expected value of weighted sum of edge loads while satisfying the producer requests:

\begin{equation}
\sum_{i \in [m], j \in [n]} E_I\bigg[ d_{i,j}(t)  \cdot  l_{i,j}(t) \bigg] \le \alpha \cdot  OPT_I(t), \forall (t, I)
\label{operation 5}
\end{equation} 

where, $\alpha$ is a constant and $OPT_I(t)$ is the output of the optimal offline algorithm for the input received in the time interval $(0, t]$ for an instance I.

Equation (\ref{operation 6}) guarantees that the total load on the edges \big( $\sum_{j \in [n]} l_{i,j}(t)$ \big) incident on each producer ($i \in [m]$) is equal to the total size of the requests generated by the producer \big($\sum_{t} R_i(t)$\big). This will be referred to as \emph{producer request} constraint.

\begin{equation}
\sum_{j \in [n]} l_{i,j}(t) = \sum_{t} R_i(t), \forall (i \in [m], \; t)
\label{operation 6}
\end{equation}

Equation (\ref{operation 7}) ensures that the total load on the edges \big($\sum_{i \in [m]} l_{i,j}(t)$\big) incident on a consumer ($j \in [n]$) does not exceed the consumer capacity ($c_{j}$). This will be referred to as \emph{consumer capacity} constraint.

\begin{equation}
\sum_{i \in [m]} l_{i,j}(t) \le c_{j}, \forall (j \in [n], \; t)
\label{operation 7}
\end{equation}

\section{Related Work}
\emph{Balls-into-Bins} \cite{balls-to-bins-tight,klaus-thesis,balanced-alloc} model is used for studying load balancing in a similar resource allocation configuration where, the objective is to place m balls into n bins while guaranteeing bounds on the maximum, minimum or the average load across all the bins. The main advantage of the model defined in section \ref{sec:problem-definition} is that it also takes into accounts the capacities of the bins (which are analogous to the consumers in the problem defined in section \ref{sec:problem-definition}). Further the model described in section \ref{sec:problem-definition} compares the performance of a randomized algorithm with the optimal offline algorithm. 

One of the well studied online algorithm for assigning resources to users is the \emph{k}-server problem \cite{k-server-randomized} where, the servers handle request once for each client. Dynamic assignment \cite{dynamic-hungarian-algo} has a similar configuration involving bipartite graphs.

\section{Offline Algorithms}
\label{sec:offline}
The optimal offline algorithm has to exhaustively look at the available edges for allocating a request. This paper uses Linear Programming and Primal-Dual algorithms for solving the offline version (section \ref{sec:problem-definition}).

\subsection{Linear Programming}
\label{sec:offline-LP}
Linear Programming \cite{lp-onlline} is a method used to solve large-scale optimization problems with a set of constraints and an objective function (minimization or maximization) both being linear.

The LP formulation for this problem is as follows,

Objective function:

Minimize:
\begin{equation}
\sum_{i \in [m], j \in [n]} d_{i,j}(t)  \cdot  l_{i,j}(t), \; \; \; d_{i,j}(t) \ge 0, \; l_{i,j}(t) \ge 0
\label{operation 8}
\end{equation}

Constraints:
\begin{equation}
\sum_{j \in [n]} l_{i,j}(t) \ge \sum_{t} R_i(t), \;\; R_i(t) >0, \;\;  \forall (i \in [m], \; t)
\label{operation 9}
\end{equation}

\begin{equation}
\sum_{i \in [m]} l_{i,j}(t) \le c_{j} \implies -\sum_{i \in [m]} l_{i,j}(t) \ge - c_{j}, \; \forall (j \in [n]\; t)
\label{operation 10}
\end{equation}

This LP is used for calculating the optimal solution OPT(t) for the input received in the time interval $(0, t], t$. As the requests $R_i, i \in [m]$ are non-negative the load assignments $l_{i,j}$ in the objective function (\ref{operation 8}) are also non-negative. Equation (\ref{operation 9}) represents the \emph{producer request} constraint corresponding to (\ref{operation 6}) whereas (\ref{operation 10}) corresponds to the \emph{consumer capacity} constraint corresponding to (\ref{operation 7}). In addition to this, new constraints corresponding to the existing load assignment on edges have to be added at each time instance $t$ for \emph{assignment without reallocation} (defined in section \ref{sec:problem-definition}).

LP formulation in section \ref{sec:offline-LP} produces a feasible \big( without violating the consumer capacity constraints (\ref{operation 7}) \big) assignment of loads $l_{i,j}$ on edges $e_{i,j}$ corresponding to the requests $R_i(t), \; \forall (i \in [m], t)$. Equation (\ref{operation 9}) guarantees that the total request generated by producers $i \in [m]$ is satisfied and (\ref{operation 10}) ensures that the capacities of consumers $j \in [n]$ are not exceeded. By definition of problem in section \ref{sec:problem-definition}, this is a valid assignment of loads on edges. LP formulation in section \ref{sec:offline-LP} produces the optimal assignment of loads $l_{i,j}$ on edges $e_{i,j}$ corresponding to the requests $R_i(t), \; \forall (i \in [m], t)$. The solution produced by LP is optimal as fractional loads are allowed and the LP considers all possible solutions minimize the objective function (\ref{operation 8}). 

\subsection{Primal-Dual}
\label{sec:offline-PD}
Primal-Dual algorithms are used for a certain class of optimization problems where there are a finite number of feasible solutions available at each step. The dual is often useful for providing intuitions about the nature of the solution that are implicit in the primal.

Consider the dual of the LP formulation in section \ref{sec:offline-LP}. Let $y_i$ be the dual variables corresponding to producers $i \in [m]$ (\ref{operation 9}) and $z_j$ be the dual variables corresponding to the consumers $j \in [n]$ (\ref{operation 10}) then the corresponding dual is,

Objective function:

Maximize:

\begin{equation}
\sum_{i \in [m]} y_i  \cdot  R_i - \sum_{j \in [n]} z_j  \cdot  c_j, \; y_i \ge 0, \; z_j \ge 0
\label{operation 11}
\end{equation}

$R_i = \sum_{t=1}^{r} R_i(t)$ is the total size of the requests produced by $P_i$ 

Constraints:
\begin{equation}
\displaystyle y_i - z_j \le d_{i,j} , \forall (i \in [m], j \in [n])
\label{operation 12}
\end{equation}

Equation (\ref{operation 12}) suggests that the potential difference between producers and consumers can be atmost equal to $d_{i,j}$. This will be referred to as dual \emph{potential limit} constraint.

According to complementary slackness conditions,
\begin{equation}
\displaystyle l_{i,j} > 0 \iff y_i - z_j = d_{i,j}
\label{operation 13}
\end{equation}

Let T be the set of tight constraints (which represent edges selected by Algorithm \ref{alg1}) \big($T = \{(i,j) \; | \; y_i - z_j = d_{i,j}, \forall (i \in [m], j \in [n]) \}$\big) and {\bf S}, the set of slack constraints \big(${\bf S} = \{ (i,j) \; | \; y_i - z_j < d_{i,j}, \forall (i \in [m], j \in [n]) \}$\big). Then consider an \emph{unit benefit} function which measures the increase in dual objective (\ref{operation 11}) corresponding to a unit amount of request \big($\Delta y_i \cdot R_i(t)$\big) produced by producer $i \in [m]$ at time $t$ where, $\Delta z_j$ are the corresponding increases in the consumer-dual variables; required for keeping the dual \emph{potential limit} constraints (\ref{operation 12}) tight.

\begin{align}
B_i(t) &= \Delta y_i  \cdot R_i(t) - \sum_{j \in T}   \Delta z_j  \cdot c_j
\label{operation 17}
\end{align}

The \emph{unit benefit} function in \ref{operation 17} is used as a criteria for selecting the request which produces the maximum increase in the dual objective function by the Primal-dual algorithm.

\begin{algorithm}[H]
\caption{Primal-Dual algorithm for \emph{Offline Assignment}}
\label{alg1}
\begin{algorithmic}[1]
\STATE $T \gets \emptyset$
\STATE ${\bf S} \gets \{ (i,j) \; | \; \forall (i \in [m], j \in [n]) \}$
\STATE $y_{i} \gets 0, \forall i \in [m]$
\STATE $z_j \gets 0, \forall j \in [n]$
\STATE $l_{i,j} \gets 0, \forall (i \in [m], j \in [n])$
\WHILE{ $\exists (i, t): B_i(t) \ge 0$}
\STATE $y_i: \max_i \big( B_i(t) \bigg)$
\STATE $z_j: \min_{j} \{d_{i,j} - (y_i - z_j) \;| \; i \in [m] ,j \in [n] \}$
\STATE $\Delta_1 = d_{i,j} - (y_i - z_j)$
\STATE $y_i \gets y_i + \Delta_1$
\STATE $z_j \leftarrow z_j + \Delta_1, \forall j \in T$
\STATE $T = T \cup (i,j)$
\STATE ${\bf S} = {\bf S} \setminus (i,j)$
\STATE $\Delta_2 = c_j - \sum_{(i, j) \in T} l_{i,j}$
\STATE $l_{i,j} \gets l_{i,j} + \Delta_2$
\ENDWHILE
\end{algorithmic}
\end{algorithm}

 The primal-dual Algorithm \ref{alg1} chooses the (producer dual variable $y_i$ corresponding to) request $R_i(t)$ with the highest benefit function $B_i(t)$ (step 7 in Algorithm \ref{alg1}) to maximize the increase in value of dual objective function and then chooses the dual \emph{potential limit} constraint ({operation 12}) (and the corresponding variable $z_j$) that is closest to becoming tight at step 8 in Algorithm \ref{alg1}  (corresponding to the least cost edge in the primal) and increases the value of $y_i$ by the amount ($\Delta_1$, in steps 9 and 10 in Algorithm \ref{alg1}) that is needed to make this constraint tight (which corresponds to selecting an edge for allocating a request in the primal using complementary slackness condition \ref{operation 13}). If this constraint is tight (T updated in step 12) then corresponding $z_j$ are also increased by $\Delta_1$ (step 11 in Algorithm \ref{alg1}) to maintain tightness. 

Increasing $d_{i,j}$ by amount $\Delta_2$ is only symbolic / superficial. The practical output of the algorithm can be traced by looking up the set of dual constraints (corresponding to edges in primal) that are selected corresponding to each request. Intuitively, optimal offline primal-dual Algorithm \ref{alg1} assigns the lower sized requests to the higher cost edges and uses the lower cost edges for the higher sized requests.

\begin{theorem}  [Optimality of Algorithm \ref{alg1}] The Primal-Dual Algorithm \ref{alg1} reaches the optimal solution for the \emph{assignment without reallocation} problem in section \ref{sec:problem-definition} when it is not possible to increase the cost of the dual objective function (\ref{operation 11}).
\end{theorem}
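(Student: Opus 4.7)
The plan is to establish optimality by appealing to LP strong duality and complementary slackness. Specifically, I would show that Algorithm \ref{alg1} maintains three invariants: (i) the dual variables $(y_i, z_j)$ remain dual-feasible at every step, (ii) the primal assignment $l_{i,j}$ is primal-feasible on termination, and (iii) the complementary slackness relations (\ref{operation 13}) tie the two solutions together. Once these hold, the primal and dual objective values coincide, so the primal solution must be optimal.

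First I would prove dual feasibility as a loop invariant. Initially $y_i = z_j = 0$ and $d_{i,j} \ge 0$, so every dual potential limit constraint (\ref{operation 12}) is satisfied. In each iteration, the algorithm picks the pair $(i,j) \in \mathbf{S}$ with the smallest slack $d_{i,j} - (y_i - z_j)$ and raises $y_i$ by exactly $\Delta_1 = d_{i,j} - (y_i - z_j)$, while simultaneously raising each $z_{j'}$ with $(i, j') \in T$ by the same amount. By construction the newly selected constraint becomes tight. For any constraint $(i, j'')$ with $j'' \in T$, both sides of $y_i - z_{j''}$ shift by the same $\Delta_1$, preserving tightness. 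For any $(i, j'')$ with $j'' \notin T$, the left-hand side increases by $\Delta_1$, but $\Delta_1$ was chosen as the minimum over all such slacks, so feasibility is still maintained. Constraints involving producers $i' \neq i$ are unaffected.

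Next I would verify primal feasibility on termination and complementary slackness. By the model assumption that every request can be placed on some consumer with remaining capacity, the main loop cannot exit while any producer demand is unserved, so the producer request constraint (\ref{operation 6}) is met. Step 14 sets $\Delta_2$ so that $l_{i,j}$ is increased only up to the remaining capacity of consumer $j$, preserving (\ref{operation 7}). For complementary slackness, $l_{i,j}$ is only incremented in step 15 immediately after $(i,j)$ is added to $T$, so $l_{i,j} > 0$ implies $(i,j) \in T$, i.e.\ $y_i - z_j = d_{i,j}$, which is exactly (\ref{operation 13}). The dual side, that $y_i > 0$ holds only when the producer constraint is tight and $z_j > 0$ only when consumer capacity is saturated, follows because the algorithm halts as soon as no benefit $B_i(t)$ is positive, i.e.\ precisely when no further dual improvement is possible.

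Once the three invariants hold, strong duality yields
\[
\sum_{i,j} d_{i,j} \, l_{i,j} \;=\; \sum_i y_i R_i - \sum_j z_j c_j,
\]
so the primal value equals the dual value and both are optimal. I expect the main obstacle to be the careful case analysis in step (i), especially checking that no constraint involving $i$ and some $j'' \notin T$ is violated by the simultaneous update: this requires genuinely exploiting the minimum-slack choice in step 8 and confirming that the benefit function $B_i(t)$ in (\ref{operation 17}) is defined precisely so that $B_i(t) \ge 0$ is equivalent to the existence of a dual-improving move, making the termination condition coincide with dual optimality.
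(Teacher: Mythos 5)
Your overall plan---maintain dual feasibility as a loop invariant, establish primal feasibility at termination, verify complementary slackness, and conclude by duality---is the same framework the paper's proof uses, and your treatment of dual feasibility (invariance under the simultaneous raise of $y_i$ and of the $z_j$ with $(i,j) \in T$, exploiting the minimum-slack choice of $\Delta_1$) is in fact more careful than the paper's one-line assertion. The concluding step also matches: once both feasibilities and (\ref{operation 13}) hold, the objective values coincide and both solutions are optimal (weak duality plus complementary slackness suffices here; you do not need strong duality as a black box).

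The genuine gap is in your primal-feasibility step. You claim that ``by the model assumption that every request can be placed on some consumer with remaining capacity, the main loop cannot exit while any producer demand is unserved.'' That does not follow. The loop exits when no $B_i(t) \ge 0$, and by (\ref{operation 17}) a uniform raise $\Delta$ of $y_i$ (together with the forced raises of the tight $z_j$) gives $B_i(t) = \Delta \big( R_i(t) - \sum_{j:(i,j)\in T} c_j \big)$, which is negative as soon as the pending request $R_i(t)$ is smaller than the total capacity of the consumers already tight for producer $i$---regardless of how much free capacity exists elsewhere in the system. So the termination condition is not equivalent to ``all demand served,'' and the availability assumption of section \ref{sec:problem-definition} (a statement about remaining capacity) cannot bridge that: the obstruction is the sign of the benefit function, not lack of space. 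The paper instead argues producer feasibility at termination through its ``potential equilibrium'' claim, deducing $\sum_{i \in T} R_i = \sum_{j \in T} c_j$ in (\ref{operation 19}) and combining it with $\sum_{j \in T} c_j = \sum_{(i,j) \in T} l_{i,j}(t)$ from (\ref{operation 20}) to conclude (\ref{operation 37}); whatever one thinks of the rigor of that argument, it is precisely the step your proposal needs and currently lacks---you would have to prove, not assume, that termination forces the producer request constraints (\ref{operation 9}) to be met.
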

\begin{proof}
nitializing $y_i \gets 0, \forall i \in [m]$ and $z_j \gets 0, \forall j \in [n]$ (steps 1 and 2 in Algorithm \ref{alg1}) and from the way we increase the dual variables (step 9 in Algorithm \ref{alg1}) the dual \emph{potential limit constraint} (\ref{operation 12}) is always satisfied. Setting $l_{i,j} \gets 0, \forall (i \in [m], j \in [n])$ (step 5 in Algorithm \ref{alg1}) makes sure that the primal is feasible at the start. The primal \emph{consumer capacity} constraints in (\ref{operation 10}) are not violated by the way we increase $l_{i,j}$ from steps 14 and 15 in Algorithm \ref{alg1}.

When the cost of the dual objective function (\ref{operation 11}) cannot be increased further,

\begin{align}
\sum_{i \in T} \Delta y_i  \cdot  R_i - \sum_{j \in T} \Delta z_j  \cdot  c_j &= 0\\
\implies \sum_{i \in T} \Delta y_i  \cdot  R_i &= \sum_{j \in T} \Delta z_j  \cdot  c_j
\label{operation 201}
\end {align}

At this point, if there was a pending request $R_i(t)$ with a positive \emph{unit benefit} function that could choose one of the slack (S) dual constraints \big( without violating any of the \emph{potential limit} dual constraints(\ref{operation 12}) \big) then the corresponding producer dual variable $y_i$ can be increased until the dual constraint became tight. By contradiction such a request does not exists by definition (step 6) of Algorithm \ref{alg1} otherwise, the algorithm would have continued. So the system is in equilibrium and changing the value of any of the dual variables will violate (one or more) dual \emph{potential limit} constraints.

However, for the current solution \big($(y_i, z_j) \in T$\big), the dual variables need to be changed by an equal amount \big($\Delta y_i = \Delta z_j = \Delta, \; \forall (i \in [m], j \in [n])$\big) for maintaining the potential equilibrium [keeping the dual constraints in T tight for meeting complementary slackness (\ref{operation 13}) condition].

\begin{equation}
\Delta y_i = \Delta z_j = \Delta
\label{operation 200}
\end{equation}

When it is not possible to increase the value of dual objective function then using (12) and (13),

\begin{equation}
\sum_{i \in T}  R_i = \sum_{j \in T}  c_j
\label{operation 19}
\end{equation}

From steps 14 and 15 in Algorithm \ref{alg1}, we know that,
\begin{equation}
\sum_{j \in T} c_j = \sum_{(i, j) \in T} l_{i,j}(t)
\label{operation 20}
\end{equation}

Using (\ref{operation 19}) and (\ref{operation 20}) we infer that the requests have been met,
\begin{equation}
\sum_{i \in T}  R_i = \sum_{j \in T} l_{i,j}(t)
\label{operation 37}
\end{equation}

This means that the primal \emph{producer request} constraints in (\ref{operation 9}) are feasible and the primal \emph{consumer capacity constraints} in (\ref{operation 10}) and dual \emph{potential limit} constraints are always feasible. Complementary slackness (\ref{operation 13}) is satisfied as we only increase $d_{i,j}$ when the dual constraint is tight. This means that the primal is optimal and the dual is optimal. Thus, this is the optimal solution for the problem in section \ref{sec:problem-definition}.
\end{proof}

\begin{theorem} [Time complexity of Algorithm \ref{alg1}] The Primal-Dual Algorithm \ref{alg1} takes $O(r \cdot n)$, time to complete where, d and n the number of requests and consumers respectively.
\end{theorem}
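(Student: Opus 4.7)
The plan is to bound the cost of the algorithm by (i) bounding the number of iterations of the main \textbf{while} loop and (ii) bounding the work done in a single iteration, then multiplying. Initialization (steps 1--5) only needs to be argued to fit inside this budget.

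First I would argue that the loop executes at most $r$ times. Each iteration selects one pending request $R_i(t)$ (step 7) whose \emph{unit benefit} $B_i(t)$ is non-negative, makes one previously slack dual constraint tight (steps 9--13), and commits a positive amount of load $\Delta_2$ on the corresponding edge (steps 14--15), pushing the pair $(i,j)$ from ${\bf S}$ into $T$. Once a request has been fully placed its benefit is no longer positive and it cannot re-enter the loop condition in step 6, so the loop is entered at most once per request, giving at most $r$ iterations overall. The only subtlety here is ensuring that Algorithm~\ref{alg1} really advances one request per iteration rather than looping forever on the same request; I would make this explicit by observing that $\Delta_2 = c_j - \sum_{(i,j)\in T} l_{i,j}$ is the residual capacity of the chosen consumer and after step 15 the request is either fully served or the chosen consumer is saturated, at which point the remainder of the request becomes a fresh entry in the next iteration with a strictly smaller residual size.

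Next I would bound the work per iteration by $O(n)$. Step 8 requires scanning the $n$ consumers to locate the one minimizing $d_{i,j}-(y_i-z_j)$; this is $O(n)$. The update in step 11 walks over all $j\in T$ and adds $\Delta_1$ to $z_j$, which is also $O(|T|)=O(n)$ since $|T|\le n$ (each consumer contributes at most one tight pair used to propagate the $z_j$ shift). The remaining steps 9, 10, 12--15 are $O(1)$ arithmetic and set updates. Step 7, selection of the producer with maximum benefit, can be charged to the current request rather than re-scanned: the request arriving at time $t$ fixes $i$, and computing $B_i(t)$ from equation (\ref{operation 17}) with $T$ maintained incrementally is $O(|T|)=O(n)$.

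Finally I would handle the initialization: steps 1--2 and 3--5 touch $O(mn)$ entries, but since the complexity statement is in terms of $r$ and $n$, and we only need the entries $(i,j)$ actually visited, this can be done lazily so that the initialization cost absorbed into the loop is $O(n)$ per iteration and the stated bound $O(r\cdot n)$ follows by multiplication. The main obstacle I expect is the careful accounting for step 11, because a naive reading could suggest the $z_j$ propagation cost accumulates across iterations; the right way to handle it is to observe that $|T|$ grows by at most one per iteration, so the per-iteration cost is genuinely $O(n)$ and not $O(n)$ amortized in a way that needs more delicate analysis.
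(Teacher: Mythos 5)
Your iteration bound and your treatment of step 11 are reasonable, but the per-iteration accounting for step 7 contains a genuine gap. You claim the producer selection can be ``charged to the current request'' because ``the request arriving at time $t$ fixes $i$.'' That reads Algorithm~\ref{alg1} as an online procedure, but it is the \emph{offline} algorithm: all requests are known up front, and step 7 selects the $y_i$ achieving $\max_i B_i(t)$ over \emph{all} pending requests $(i,t)$ --- no request ``arrives'' and fixes $i$ for you. Locating that maximum requires examining every request not yet covered, which is proportional to the number of pending requests per iteration, unless you add and justify extra machinery (e.g.\ a priority queue whose keys remain valid as $T$ and the $z_j$ change, which is nontrivial since $B_i(t)$ depends on both through equation (\ref{operation 17})). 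This is precisely the step the paper's own proof charges $O(r)$ per iteration for, using a hash-table flag so the scan shrinks by one covered request each round; your $O(n)$ per-iteration claim silently deletes that cost.

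Two further consequences. First, with the corrected per-iteration cost $O(r+n)$, your multiplication yields $O\big(r \cdot (r+n)\big)$ rather than $O(r \cdot n)$; the paper reaches the stated $O(r \cdot n)$ only via an (admittedly loose) amortized argument over the shrinking set of uncovered requests, and any honest proof needs to either reproduce that argument or note the implicit assumption $r = O(n)$. Second, your own discussion undercuts the ``at most $r$ iterations'' claim: as you observe, an iteration may saturate a consumer without finishing the request, so the remainder re-enters the loop, and the correct count is at most $r+n$ (each iteration either completes a request or saturates a consumer). That is harmless for the asymptotic bound but inconsistent with the ``entered at most once per request'' statement as written.
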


\begin{proof} 
At each step of the Algorithm \ref{alg1} $O(r)$ (where r is the number of requests) operations are needed to select the (producer dual variable $y_i$ corresponding to the) request with maximum \emph{unit benefit} function, another $O(n)$ operations to find the corresponding consumer dual variable $z_j$. Hence, it takes $O (d + n)$ operations for the first (while loop from steps 6 to 16 in Algorithm \ref {alg1}) and $O (d-1 + n)$ operations for the second request by keeping track of the requests that have been covered by setting a flag in a hash table. This gives a time complexity of $O(r \cdot n)$ using amortized analysis. 
\end{proof}

\section{Online Algorithms}
\label{sec:online-algorithms}
In the beginning we look at the following simplified model ${\cal M}$,

\begin{itemize}
\item Let $R(t)$ be the request in round $t$ and let $C(t)$ denote the set of consumers in round $t$ that have enough space left to store request $R(t)$. 
\item All requests $R(t)$ have the same size s.
\item All consumers $C_i$ have equal capacities $c$.
\item An oblivious online adversary who can only manipulate the size of the online requests.
\end{itemize}

We use randomized algorithms for the online version as for a deterministic algorithm the oblivious online adversary who knows the edge selected in each round can manipulate the online request sizes to distort the value of the objective function (\ref{operation 5}).

The simplified algorithm ${\cal A}$ is:

\begin{algorithm}[H]
\caption{Online Algorithm ${\cal A}$}
\label{alg2}
\begin{algorithmic}[1]
\STATE Given: $P$; $C$; $c$; $d_{i,j}$; $s$
\STATE Initialize: $\ell_j = 0$ $\forall j \in [n]$; $G = 0$; $n = |C|$
\FOR {$t = 1$ to $r$}
\STATE Choose $P_i \in [m]$ independently and uniformly at random
\STATE $C(t) = \{c_j \: | \: j \in [n] \land \ell_j + s \leq c\}$
\STATE Choose $c_j \in [n]$ independently and uniformly at random  from $C(t)$.
\STATE $\ell_j = \ell_j + s$
\STATE $G = G + s \cdot d_{i,j}$
\ENDFOR
\end{algorithmic}
\end{algorithm}

\begin{theorem}  
Assuming model ${\cal M}$ and running algorithm ${\cal A}$, the expected total cost is $E[G] = \sum_{t=1}^{r} s \cdot \overline{d}$.
\end{theorem}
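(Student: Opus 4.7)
My plan is to peel off the sum via linearity of expectation and then reduce the per-round expected distance $E[d(t)]$ to the edge-average $\overline{d}$ using the uniform choices made in Algorithm~\ref{alg2}. Concretely, since every request has the same fixed size $s$ and the total cost is $G=\sum_{t=1}^{r} s\cdot d(t)$, linearity gives $E[G]=s\sum_{t=1}^{r} E[d(t)]$, so the entire claim reduces to showing $E[d(t)]=\overline{d}$ for each round $t$.

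To evaluate $E[d(t)]$, I would condition on the randomness of all earlier rounds (which determines the set $C(t)$ of consumers that still have free capacity). In the simplified model $\mathcal{M}$ all consumers have identical capacity $c$ and, by the standing assumption that the number and size of requests are limited so that every request can commit to at least one consumer, I would argue that in this simplified regime one may treat $C(t)=[n]$ for every round, since no consumer is exhausted before round $t$ (this is essentially the only hypothesis that makes the stated ``$\overline{d}$'' an unconditional constant). Given this, $P_i$ is drawn independently and uniformly from $[m]$ and $C_j$ is drawn independently and uniformly from $[n]$, so each edge $e_{i,j}$ is picked with probability $\tfrac{1}{mn}$, and
\begin{equation*}
E[d(t)] \;=\; \sum_{i\in[m]}\sum_{j\in[n]} \frac{1}{mn}\, d_{i,j} \;=\; \overline{d}.
\end{equation*}
Summing over $t=1,\dots,r$ then yields $E[G]=\sum_{t=1}^{r} s\cdot \overline{d}$ as claimed.

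The main obstacle I anticipate is exactly the reduction ``$C(t)=[n]$''. If one wants to avoid that implicit assumption, one must instead argue that, conditional on the history, $C(t)$ is independent of the fresh choice of $P_i$ in round $t$; then for any realized $C(t)$ one gets $E[d(t)\mid C(t)] = \tfrac{1}{m\,|C(t)|}\sum_{i\in[m]}\sum_{j\in C(t)} d_{i,j}$, which in general is \emph{not} $\overline{d}$ unless the per-consumer averages $\tfrac{1}{m}\sum_i d_{i,j}$ are equal across $j$, or unless no consumer is ever removed. I would therefore flag that the cleanest reading of the theorem is the ``no consumer exhausted during the execution'' regime (consistent with the simplified model's uniform capacities and the size/number bounds already stated in Section~\ref{sec:problem-definition}), and carry out the three-line argument above under that reading.
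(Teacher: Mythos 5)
Your reduction via linearity of expectation to the per-round claim $E[d(t)]=\overline{d}$ is fine, but the way you discharge that claim has a genuine gap: you prove the theorem only under the extra hypothesis that no consumer is ever exhausted ($C(t)=[n]$ for every round), and you then suggest the statement cannot hold without it. That hypothesis is not part of model $\mathcal{M}$ or of the theorem: the model explicitly defines $C(t)$ as the set of consumers with residual capacity, Algorithm~\ref{alg2} redraws uniformly from $C(t)$ in every round, and the paper's proof is written precisely to handle ``consumer failures.'' The idea you are missing is that $C(t)$ is itself random, and the quantity needed is the unconditional expectation $E[d(t)]$, not $E[d(t)\mid C(t)=S]$ for a fixed $S$. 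Because all capacities equal $c$, all requests have size $s$, and every draw is uniform over the surviving consumers (the distances $d_{i,j}$ never influence the dynamics), the load vector $(\ell_1,\dots,\ell_n)$ is exchangeable at every time; consequently, conditional on $|C(t)|=n-k$, the set $C(t)$ is uniformly distributed over all $\binom{n}{n-k}$ subsets of size $n-k$. Averaging your conditional expectation over this random subset restores the result: each fixed consumer $j$ lies in a $\tfrac{n-k}{n}$ fraction of such subsets, so
\begin{equation*}
E\Big[d(t)\;\Big|\;P_i=i,\;|C(t)|=n-k\Big]
=\frac{1}{\binom{n}{n-k}}\sum_{S:\,|S|=n-k}\frac{1}{n-k}\sum_{j\in S} d_{i,j}
=\frac{1}{n}\sum_{j=1}^{n} d_{i,j},
\end{equation*}
independent of $k$. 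Taking the expectation over $k$ and then over the uniformly chosen producer $i$ gives $E[d(t)]=\overline{d}$ unconditionally, and summing over rounds yields the theorem. This two-stage averaging (over which consumers have failed, then over the producer) is exactly what the paper's equations (18)--(26) carry out, albeit with garbled notation.

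In short: your observation that $E[d(t)\mid C(t)=S]\neq\overline{d}$ for a fixed $S$ is correct, but it is the reason an extra averaging step is required, not a reason to restrict the theorem to the ``no consumer exhausted'' regime. The missing lemma is the symmetry (exchangeability) of the failure process under equal capacities and equal request sizes, which makes every subset of a given size equally likely to be the surviving set and hence makes the mean of the within-subset average distance equal to the global mean $\overline{d}$.
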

\begin{proof} Using the law of total probability over the different combination of consumer failures $|C_j| = n -k$ when there are $n-k$ consumers available,

\begin{align}
p\Big(e_{i=P_i, j} \bigm\vert (|C(t)| = n-k)\Big)   &= \sum_{j=1}^{{n \choose k}} p\Big(e_{i=P_i, j}  \bigm\vert |C_j = n -k\Big) \cdot p\Big(|C_j| = n -k\Big) \\
&= \frac{1}{{n \choose k}} \Bigg( \frac{{n \choose k} \cdot  \sum_{j=1}^{n} p(e_{i,j}) - \frac{{n \choose k} \cdot k}{n} \cdot \sum_{j=1}^{n} p(e_{i,j}) }{n-1} \Bigg) \\
&= \frac{\sum_{j=1}^{n} d_{i,j}}{n}
\label{operation 900}
\end{align}

Using (\ref{operation 900}) and the law of total expectation over the number of consumers available,

\begin{align}
E\Big[d_{i=P_i,j} \bigm\vert P_{i}\Big]  &= \sum_{k=1}^{n} E\Big[d_{i=P_i, j}  \bigm\vert (|C(t)| = n-k)\Big] \cdot p\Big( |C(t)| = n-k\Big) \\
&= \frac{\sum_{j=1}^{n} d_{i,j}}{n} \cdot \sum_{k=1}^{n} p\Big(|C(t)| = n-k\Big) \\
&= \frac{\sum_{j=1}^{n} d_{i,j}}{n}
\label{operation 115}
\end{align}

We calculate the expected distance of an edge selected in any round over the scenarios corresponding to different number $k$ of consumers available using (21). As the probability to select an edge does not change in the different scenarios we deduce in (18), (19) and (20) that, there is uniform probability of selecting any (consumer) edge given that producer $P_i$ is selected in a round. Intuitively this follows form symmetry as each consumer is equally likely to be picked in any round assuming equal capacities.

Using the law of total expectation over the producer selected in a round,

\begin{align}
E\Big[d_{i, j} \Big]   &= \sum_{i=1}^{m} E\Big[d_{i=P_i, j}  \bigm\vert i=P_i\Big] \cdot p\Big(i=P_i\Big) \\
&= \sum_{i=1}^{m} \bigg[  \frac{1}{n}  \cdot \sum_{j=1}^{n} d_{i,j} \bigg] \cdot \frac{1}{m} \\
&= \frac{1}{m \cdot n} \sum_{i=1}^{m} \sum_{j=1}^{n} d_{i,j} = \overline{d}
\label{operation 114}
\end{align}

So the expected cost of Algorithm ${\cal A}$ (\ref{alg2}) in any round is $E[G] = s \cdot E[d_{i,j}] = s \cdot \overline{d}$. The best possible cost of the optimal offline algorithm OPT in any round is when it chooses the cheapest edge $\min_{i, j} d_{i,j}$ in each round. Thus, the average-case (\cite{average-case}) competitive ratio is $\bigg( \frac{\overline{d}}{\min_{i, j} d_{i,j}} \bigg)$. The worst-case competitive ratio is $\bigg(\frac{d_{max}}{\min_{i, j} d_{i,j}}\bigg)$.

\end{proof}

\begin{theorem}  
Extending model ${\cal M}$ to ${\cal M}2$ with arbitrary producer request sizes, equal consumer capacities, arbitrary edge distances and allowing a single requests to be allocated across multiple consumers and running algorithm ${\cal A}2$, the expected total cost is $E[G] = \sum_{t=1}^{r} s(t) \cdot \overline{d}$.
\end{theorem}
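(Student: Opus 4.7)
The plan is to reduce the claim to the single-unit case already handled in the previous theorem by decomposing each round's allocation of $s(t)$ into its sub-steps and applying linearity of expectation. First, I would pin down what Algorithm $\mathcal{A}2$ does: in round $t$, after choosing the producer $P_i$ uniformly at random, the algorithm picks a consumer $c_j$ uniformly at random from $C(t)$ and assigns as much of the residual request as fits in $c_j$'s remaining capacity; if the request is not yet exhausted, it repeats the uniform selection on the updated $C(t)$ (with any just-filled consumer removed) until the full $s(t)$ has been placed.

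Next, I would invoke the symmetry argument used in the proof of Theorem~3. Because all consumer capacities equal $c$, the consumer labels are exchangeable at the start, so conditioning on which subset $C(t)$ remains available does not alter the marginal probability of any particular edge. Consequently, for each unit of allocation, $E[d_{i,j} \mid P_i] = \tfrac{1}{n}\sum_{j=1}^{n} d_{i,j}$, and further averaging over the uniform choice of $P_i$ yields $E[d_{i,j}] = \overline{d}$ exactly as in equations~(18)--(22).

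By linearity of expectation applied to the (infinitesimally small, or unit-sized) sub-allocations that together make up the request of size $s(t)$, the expected cost contributed by round $t$ is $s(t) \cdot E[d_{i,j}] = s(t) \cdot \overline{d}$. Summing over $t = 1,\ldots,r$ gives $E[G] = \sum_{t=1}^{r} s(t) \cdot \overline{d}$, as claimed.

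The main obstacle is justifying that the per-unit symmetry survives \emph{within} a single round: when one request is split across several consumers, the available set $C(t)$ shrinks mid-round, and one must argue that each subsequent sub-step still has expected distance $\tfrac{1}{n}\sum_{j=1}^{n} d_{i,j}$ given $P_i$. I would handle this by induction on the sub-step index, observing that after any prefix of uniform sub-allocations the remaining-capacity profile is exchangeable over the $n$ consumer labels (all start with capacity $c$ and every selection so far has been label-symmetric). Hence the conditional distribution over the next chosen edge is uniform over the labels in the current $C(t)$, and the same inner sum from Theorem~3 reappears unchanged at every sub-step, letting the reduction above go through cleanly.
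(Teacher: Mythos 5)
Your proposal is correct and follows essentially the same route as the paper: split each request $s(t)$ into unit blocks, reuse the symmetry argument of Theorem~3 (equal capacities $\Rightarrow$ uniform marginal edge probability $\Rightarrow$ per-unit expected distance $\overline{d}$), and sum by linearity of expectation over blocks and rounds. Your closing induction showing that the load profile remains exchangeable across sub-steps within a round is a more careful justification of what the paper dismisses with ``it follows from symmetry,'' so it strengthens rather than departs from the paper's argument.
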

\begin{proof} The simplified algorithm ${\cal A}2$ is:

\begin{algorithm}[H]
\caption{Online Algorithm ${\cal A}2$}
\label{alg3}
\begin{algorithmic}[1]
\STATE Given: $P$; $C$; $c$; $d_{i,j}$; $s$
\STATE Initialize: $\ell_j = 0$ $\forall j \in [n]$; $G = 0$; $n = |C|$
\FOR {$t = 1$ to $r$}
\STATE Choose $P_i \in [m]$ independently and uniformly at random
\STATE Choose an arbitrary request size $s(t) = \bigg(0, \sum_{j \in C} (c_j - \ell_j) \bigg]$
\FOR {$s=1$ to $s(t)$}
\STATE $C = \{c_j \: | \: j \in [n] \land \ell_j + s(t) \leq c\}$
\STATE Choose $c_j \in [n]$ independently and uniformly at random
\STATE $\ell_j = \ell_j + 1$
\STATE $G = G + d_{i,j}$
\ENDFOR
\ENDFOR
\end{algorithmic}
\end{algorithm}

By splitting the requests into unit sized blocks of size $f_D$ and using the fact that there is equal probability for the blocks to be allocated to any consumer, it follows from symmetry that a block of unit requests is equally likely to belong to any producer $P_i$. Using (20), (22) and (26) from {\bf Theorem 3} we get the expected cost of the edge selected in anay round as $\overline{d}$. So the expected cost is $E[G] = \sum_{t=1}^{r} s(t) \cdot \overline{d}$ and the average case competitive ratio is $\bigg( \frac{\overline{d}}{\min_{i, j} d_{i,j}} \bigg)$
\end{proof}

\begin{theorem}  
Extending model ${\cal M}2$ to ${\cal M}3$ with arbitrary producer demands, arbitrary consumer capacities and arbitrary edge distances and running algorithm ${\cal A}3$, the expected total cost is $E[G] = \sum_{t=1}^{r} s(t) \cdot \frac{\overline{d \cdot c}}{\overline{c}}, \; \forall t$.
\end{theorem}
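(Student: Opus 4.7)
My plan is to follow the same template used in Theorem~3 and Theorem~4, but replace the ``uniform over available consumers'' step with a capacity-weighted selection, since the symmetry between consumers breaks once the capacities differ. First, I would specify the natural algorithm $\mathcal{A}3$: split the incoming request $s(t)$ into unit blocks (as in $\mathcal{A}2$) and, for each block, pick a producer $P_i$ uniformly at random from $[m]$ and then pick a consumer $c_j$ from the currently available set $C(t)=\{c_j : \ell_j + 1 \le c_j\}$ with probability proportional to its remaining capacity. This is the only choice rule that preserves a per-round symmetry strong enough to carry a closed-form expectation through.

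Next, I would establish the per-round consumer probability, which is the analogue of equation~(\ref{operation 900}). Conditioning on the subset of consumers that are still live and summing over subsets using the law of total probability (just as in the proof of Theorem~3), I would show that for every fixed producer $P_i$ the probability that edge $e_{i,j}$ is selected in a given round equals $c_j / \sum_k c_k = c_j / (n\,\overline{c})$. The cancellation here is the exact capacity-weighted analogue of the $1/n$ that appears in~(\ref{operation 115}); the reason it still telescopes across failure scenarios is that removing a consumer reduces both the numerator and the denominator by a proportional amount, so the marginal selection probability is invariant under which other consumers have already been exhausted.

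With that in hand I would apply the law of total expectation twice, mirroring the chain (\ref{operation 115})--(\ref{operation 114}). Conditioning on the producer first and the available-set size afterwards gives, for any single unit block,
\begin{align}
E[d_{i,j}] &= \sum_{i=1}^{m} \frac{1}{m}\sum_{j=1}^{n} \frac{c_j}{n\,\overline{c}}\; d_{i,j}
           = \frac{1}{m\,n\,\overline{c}} \sum_{i=1}^{m}\sum_{j=1}^{n} d_{i,j}\cdot c_j
           = \frac{\overline{d\cdot c}}{\overline{c}}.
\end{align}
Summing this expectation over the $s(t)$ unit blocks in round $t$ and then over all rounds $t=1,\dots,r$ (using linearity of expectation) yields $E[G] = \sum_{t=1}^{r} s(t)\cdot \frac{\overline{d\cdot c}}{\overline{c}}$, and dividing by the best-case offline cost $s(t)\cdot \min_{i,j} d_{i,j}$ gives the claimed average-case competitive ratio $\overline{d\cdot c}/(\overline{c}\cdot \min_{i,j} d_{i,j})$.

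The step I expect to be the main obstacle is the conditioning argument for the consumer marginal: unlike the equal-capacity case, one cannot invoke pure symmetry between consumers, and the capacity of a consumer that fails affects both the normalizer and the number of unit allocations that preceded the failure. I would handle this by induction on the number of unit allocations, maintaining as an invariant that, conditioned on the identity of the surviving consumers and their residual capacities, the per-unit selection distribution is still proportional to residual capacities; the weighted draw in $\mathcal{A}3$ is exactly what preserves this invariant, and it lets the sum over failure scenarios collapse to $c_j/(n\,\overline{c})$ as needed for the computation above.
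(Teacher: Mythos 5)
Your proposal is correct, and it reaches the paper's answer, but it takes a genuinely different route at the one step that actually matters. The paper's algorithm ${\cal A}3$ selects a consumer with probability proportional to its \emph{original} capacity, $c_j/\sum_{k=1}^{n} c_k$ (not renormalized over the currently available set), and then justifies the per-block marginal by asserting that ``using unit sized requests the consumer load always remains proportional to its capacity'' and by explicitly \emph{assuming equal failure times} --- i.e.\ that no consumer is exhausted before the others, so the stated selection probabilities remain valid throughout. Your version instead weights by \emph{residual} capacity, and your key lemma --- that under this rule the marginal distribution of every unit block is exactly $c_j/\sum_k c_k$, proved by induction on the number of allocated blocks (equivalently, by exchangeability: each block is a uniformly random remaining unit of capacity, so the process is sampling without replacement from an urn with $c_j$ balls of color $j$) --- makes that marginal exact for every block up to total exhaustion, with no failure-time assumption at all. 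The induction closes cleanly because the total residual capacity $\sum_k c_k - t$ is deterministic, so $E\bigl[(c_j-\ell_j)/(\sum_k c_k - t)\bigr] = c_j/\sum_k c_k$ follows from $E[\ell_j] = t\,c_j/\sum_k c_k$. After that, both proofs perform the same double expectation over the independent producer and consumer choices (you condition producer-first, the paper consumer-first; this is immaterial) and obtain $E[d_{i,j}] = \overline{d\cdot c}/\overline{c}$ and the ratio $\overline{d\cdot c}/(\overline{c}\cdot\min_{i,j} d_{i,j})$. In short: you changed the algorithm's weighting from fixed to residual capacity, and in exchange got a rigorous proof of the marginal that the paper only obtains under an unproved ``equal failure times'' hypothesis; arguably your reconstruction is the variant of ${\cal A}3$ for which the theorem's claimed expectation holds exactly.
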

\begin{proof} 
The simplified algorithm ${\cal A}3$ is,
\begin{algorithm}[H]
\caption{Online Algorithm ${\cal A}3$}
\label{alg4}
\begin{algorithmic}[1]
\STATE Given: $P$; $C$; $c$; $d_{i,j}$; $s$
\STATE Initialize: $\ell_j = 0$ $\forall j \in [n]$; $G = 0$; $n = |C|$
\FOR {$t = 1$ to $r$}
\STATE Choose $P_i \in [m]$ independently and uniformly at random
\STATE Choose an arbitrary request size $s(t) = \bigg(0, \sum_{j \in C} (c_j - \ell_j) \bigg]$
\FOR {$s=1$ to $s(t)$}
\STATE $C = \{c_j \: | \: j \in [n] \land \ell_j + s(t) \leq c\}$
\STATE Choose $c_j \in n$ with probability $\bigg( \frac{c_j}{\sum_{j=1}^{n} c_j } \bigg)$
\STATE $\ell_j = \ell_j + 1$
\STATE $G = G + d_{i,j}$
\ENDFOR
\ENDFOR
\end{algorithmic}
\end{algorithm}

Let $X_{i,j}$ be a indicator random variable that indicates that producer $P_i$ selects consumer $C_j$. As requests are split into unit sized blocks, the expected load on consumer $C_j$ assigned by producer $P_i$ is equal to value of $X_{i,j}$. Then $E[X_{i,j}] = E\Big[C_j \bigm\vert P_i\Big] = \frac{c_j}{\sum_{j \in n} c_j}$. Let $Y_j$ be the indicator random variable for the load placed on consumer $C_j$ in a specific round then,$E[Y_j] = \sum_{P_i=1}^{m} E[X_{P_i,j}] \cdot p(P_i) = \frac{c_j}{\sum_{j \in n} c_j}$ as each producer is equally likely to be selected in a round. Let $Y_j(t)$ denote the load placed on consumer $C_j$ after r requests have been completed then $E[Y_j(t)] = \sum_{t=1}^{r} s(t) \cdot E[Y_j] = \frac{c_j}{\sum_{j \in [n]} c_j} \cdot \sum_{t=1}^{r} s(t) $. Using unit sized requests the consumer load always remains proportional to its capacity.

As the producers are chosen uniformly at random the probability that a consumer $C_j$ was picked by a producer $P_i$ in any round is $p[P_i \bigm\vert C_j] = \frac{1}{m}$. Size of the request chosen $(0, \sum_{j \in C} (c_j - \ell_j)]$ at each iteration can be atmost equal to the remaining capacity available as it can be split amonst the consumers.

Expected distance of the edge when consumer $C_j$ is picked is,

\begin{align}
E\Big[d_{ij} \bigm\vert C_j\Big] &= \sum_{i=1}^{m} d_{i,j} \cdot p\Big(P_i \bigm\vert C_j\Big) \\
& = \frac{1}{m} \cdot \sum_{i=1}^{m} d_{i,j}
\label{operation 301}
\end{align}

The expected cost of edge picked in a any round assuming equal failure times,

\begin{align}
E[d_{i,j}] &= \sum_{j=1}^{n} E\Big[ d_{i,j} \bigm\vert c_{j} \Big] \cdot p(c_{j}) \\
&= \frac{1}{m} \cdot \sum_{j=1}^{n}  \sum_{i=1}^{m} d_{i,j} \cdot \frac{c_j}{\sum_{j=1}^{n} c_j } \\
&= \frac{n}{\sum_{j=1}^{n} c_j} \cdot \frac{\sum_{j=1}^{n}  \sum_{i=1}^{m} d_{i,j} \cdot c_j}{m \cdot n} = \frac{\overline{d \cdot c}}{\overline{c}}
\label{operation 302}
\end{align}

Expected cost is $E[G] = \sum_{t=1}^{r} s(t) \cdot \frac{\overline{d \cdot c}}{\overline{c}}$ and the average-case competitive ratio is $\bigg( \frac{\overline{d \cdot c}}{\overline{c} \cdot \min_{i,j} d_{i,j}} \bigg)$. Although the competitve ratio depends on the edge distances and consumer capacities, it is not possible to assume equal failure times using any other edge probability although this online algorithm $\mathcal{A}3$ is not optimal for the objective function in (\ref{operation 5}).

\end{proof}

\begin{figure} [h]
\centering
\includegraphics[width=120mm]{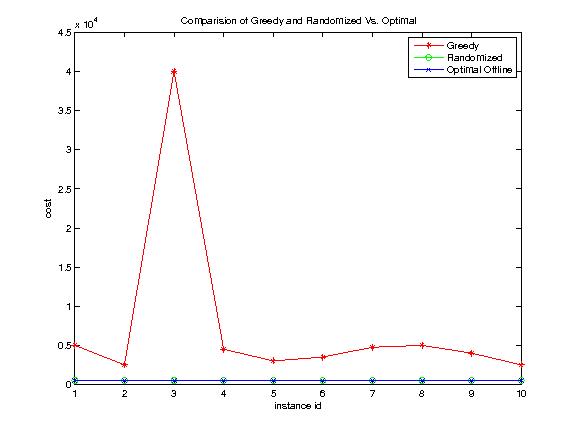}
\caption{Comparision of Greedy and Randomized Vs. Optimal Offline for special cases}
\label{comparision}
\end{figure}

The instances considered ranged from 1 to 100 producers and 1 to 100 consumers. The size of demands, edge distances and consumer capacities were picked randomly.

\section{Conclusion}
\label{sec:conclude}
The average-case competitive ratio of of $\bigg( \frac{\overline{d}}{\min_{i, j} d_{i,j}} \bigg)$ for the model with equal consumer capacities and arbitrary producer requests indicates that the performance depends on the quality of majority of links. For arbitrary consumer capacities with an average-case competitve ratio of $\bigg( \frac{\overline{d \cdot c}}{\overline{c} \cdot \min_{i,j} d_{i,j} } \bigg)$ the performance is decided by the quality of the links connected to the consumers with higher capacites. The optimal offline primal-dual algorithm runs in $O(r \cdot n)$ time whereas the online algorithms take $O(r)$ time where r and n are the number of requests and consumers respectively.

\end{document}